\newtheorem{theorem}{Theorem}[section]
\theoremstyle{definition}
\theoremstyle{remark}
\newtheorem{remark}[theorem]{Remark}
\numberwithin{equation}{section}
\newcommand{\e}{\epsilon}
\renewcommand{\k}{\kappa}
\newcommand{\ga}{\gamma}
\newcommand{\ra}{\rightarrow}
\newcommand{\al}{\alpha}
\newcommand{\pa}{\partial}
\newcommand{\la}{\lambda}
\newcommand{\om}{\omega}
\newcommand{\non}{\nonumber}
\newcommand{\hu}{\hat{u}}
\newcommand{\hv}{\hat{v}}
\begin{document}

\title[A Resolution of the Paradox of Enrichment]{A Resolution of the Paradox of Enrichment}

\author{Z. C. Feng}
\address{Department of Mechanical and Aerospace Engineering, 
University of Missouri, Columbia, MO 65211}
\email{fengf@missouri.edu}

\author{Y. Charles Li}
\address{Department of Mathematics, University of Missouri, 
Columbia, MO 65211, USA}
\email{liyan@missouri.edu}
\urladdr{http://www.math.missouri.edu/~cli}

\curraddr{}
\thanks{}

\subjclass{Primary 92; Secondary 35, 34, 37}
\date{}

\dedicatory{}

\keywords{The paradox of enrichment, predator-prey model, dimensionless number, limit cycle, spatio-temporal dynamics}

\begin{abstract}
The paradox of enrichment was observed by M. Rosenzweig \cite{Ros71} in
a class of predator-prey models. Two of the parameters in the models are crucial for the paradox. These two 
parameters are the prey's carrying capacity
and prey's half-saturation for predation. Intuitively, increasing the carrying capacity due to enrichment of 
the prey's environment should lead to a more stable predator-prey system. Analytically, it turns out that 
increasing the carrying capacity always leads to an unstable predator-prey system that is susceptible to 
extinction from environmental random perturbations. This is the so-called paradox of enrichment. Our resolution 
here rests upon a closer investigation on a dimensionless number $H$ formed from the carrying capacity and the 
prey's half-saturation. By recasting the models into dimensionless forms, the models are in fact governed by 
a few dimensionless numbers including $H$. The effects of the two parameters: carrying capacity and half-saturation
are incorporated into the number $H$. In fact, increasing the carrying capacity is equivalent (i.e. has the same 
effect on $H$) to decreasing the half-saturation which implies more aggressive predation. Since there is no 
paradox between more aggressive predation and instability of the predator-prey system, the paradox of enrichment 
is resolved. 

The so-called instability of the predator-prey system is characterized by the existence of a stable limit cycle in the 
phase plane, which gets closer and closer to the predator axis and prey axis. Due to random environmental 
perturbations, this can lead to extinction. We also further explore spatially dependent models for which the 
phase space is infinite dimensional. The spatially independent limit cycle which is generated by a Hopf bifurcation 
from an unstable steady state, is linearly stable in the infinite dimensional phase space. Numerical simulations 
indicate that the basin of attraction of the limit cycle is riddled. This shows that spatial perturbations can sometimes 
(neither always nor never) remove the paradox of enrichment near the limit cycle!
\end{abstract}

\maketitle

\section{Introduction}

The paradox of enrichment was first observed by M. Rosenzweig \cite{Ros71} in a class of mathematical predator-prey models.
Since then, there have been a lot of studies on the subject \cite{GR72}  \cite{AW96}  \cite{BSB07} \cite{CP01} \cite{Die07} 
\cite{FW86} \cite{GK99} \cite{JG05} \cite{Kir98} \cite{MN08} \cite{RGB08} \cite{Rie74} \cite{RC07} \cite{SB95} 
\cite{SP11} \cite{VKDM04}. These studies cover a wide spectrum of topics including invulnerable prey, unpalatable prey,
prey toxicity, induced defense, spatial inhomogeneity etc..
The paradox roughly says that in a predator-prey system, increasing the nutrition to the prey may 
lead to an extinction of both the prey and the predator. It is possible that the paradox is purely an artifact of the mathematical 
models, while in reality increasing the nutrition never leads to extinction. Our studies here totally focus upon the mathematical 
models themselves. We are not exploring the experimental aspect of the subject. As far as the original mathematical models 
\cite{Ros71} are concerned, we notice that the paradox can be resolved once the models are put into dimensionless forms. In 
dimensionless forms, the essential functions of control parameters can be revealed.

\section{The Predator-Prey Model}

The predator-prey model is as follows (a spatio-temporal extension of one of those in \cite{Ros71}),
\begin{eqnarray}
& & \frac{\pa U}{\pa T} - C_1 \frac{\pa V}{\pa X} \frac{\pa U}{\pa X} = D \frac{\pa^2 U}{\pa X^2} +\al U 
\left (1- \frac{U}{b} \right ) - \ga \frac{U}{U+h} V , \label{Mod1} \\
& & \frac{\pa V}{\pa T} + C_2 \frac{\pa U}{\pa X} \frac{\pa V}{\pa X} = D \frac{\pa^2 V}{\pa X^2} +
\left (\k \ga \frac{U}{U+h} - \mu \right ) V , \label{Mod2} 
\end{eqnarray}
where $U$ is the prey density, $V$ is the predator density, $T$ is the time coordinate, $X$ is the one-dimensional 
space coordinate, $C_1$ and $C_2$ are the coefficients of migration due to predation, $D$ is the spreading (diffusion) 
coefficient of the species (chosen to be the same for both predator and prey), $\al$ is the maximal per capita birth rate of 
the prey, $b$ is the carrying capacity of the prey from the nutrients, $h$ is the half-saturation prey density for 
predation, $\ga$ is the coefficient of the intensity of predation, $\k$ is the coefficient of food utilization of the 
predator, and $\mu$ is the mortality rate of the predator. The last two terms in equation (\ref{Mod1}), i.e. the prey 
birth term and the predation term, can take many different specific forms, but have the same characteristics as 
(\ref{Mod1}), which leads to the paradox of enrichment, see \cite{Ros71}. 

Equations (\ref{Mod1})-(\ref{Mod2}) can be rewritten in the following dimensionless form:
\begin{eqnarray}
& & \frac{\pa u}{\pa t} - c_1 \frac{\pa v}{\pa x} \frac{\pa u}{\pa x} =  \frac{\pa^2 u}{\pa x^2} + u
(1-u ) - \frac{u}{u+H} v , \label{Md1} \\
& & \frac{\pa v}{\pa t} + c_2 \frac{\pa u}{\pa x} \frac{\pa v}{\pa x} =  \frac{\pa^2 v}{\pa x^2} + k
\left (\frac{u}{u+H} - r \right ) v , \label{Md2} 
\end{eqnarray}
where $u = U/b$, $v= V \ga /(\al b)$, $t=\al T$, $x=X\sqrt{\al /D}$, and the dimensionless numbers are given by
\begin{equation}
H = \frac{h}{b}, \ r = \frac{\mu}{\k \ga}, \ k = \frac{\k \ga}{\al}, \ c_1 = C_1 \frac{\al b}{\ga D}, 
\ c_2 = C_2 \frac{b}{D}.
\label{dln}
\end{equation}
We name $H$: the capacity-predation number, and $r$: the mortality-food number. These two dimensionless numbers are 
crucial in our resolution of the paradox of enrichment. The spatial domain is chosen to be finite $x \in [0, L]$.
Three types of boundary conditions can be posed,
\begin{enumerate}
\item Neumann boundary condition, 
\[
\frac{\pa u}{\pa x}|_{x=0,L} = \frac{\pa v}{\pa x}|_{x=0,L} = 0;
\]
\item periodic boundary condition, $u$ and $v$ are periodic in $x$ with period $L$;
\item Dirichlet boundary condition,
\[
u|_{x=0,L} = v|_{x=0,L} = 0.
\]
\end{enumerate}
In the Dirichlet boundary condition case, the spatially uniform dynamics [$\pa_x =0$ in (\ref{Md1})-(\ref{Md2})]
is excluded. Thus the orginal paradox of enrichment for the uniform dynamics posed by M. Rosenzweig \cite{Ros71} 
is also excluded. 

\section{Formulation of the Paradox of Enrichment}

The paradox of enrichment was originally formulated by M. Rosenzweig \cite{Ros71} for the spatially uniform 
dynamics [$\pa_x =0$ in (\ref{Mod1})-(\ref{Mod2})]:
\begin{eqnarray}
& & \frac{d U}{d T}  = \al U \left (1- \frac{U}{b} \right ) - \ga \frac{U}{U+h} V , \label{ud1} \\
& & \frac{d V}{d T}  = \left (\k \ga \frac{U}{U+h} - \mu \right ) V . \label{ud2} 
\end{eqnarray}
The paradox focuses upon the linear stability of the steady state given by
\[
\k \ga \frac{U}{U+h} - \mu = 0, \ \al \left (1- \frac{U}{b} \right ) - \ga \frac{1}{U+h} V = 0 .
\]
It turns out that when other parameters are fixed, increasing $b$ leads to the loss of stability of this steady state,
in which case, a limit cycle attractor around the steady state is generated. As $b$ increases, the limit cycle gets 
closer and closer to the $V$-axis. That is, along the limit cycle attractor, the prey population $U$ decreases to
a very small value. Under the ecological random perturbations, $U$ can reach $0$, i.e. extinction of the prey. With 
the extinction of the prey, the predator will become extinct soon. On the other hand, increasing $b$ means increasing 
the carrying capacity of the prey, which can be implemented by increasing the prey's nutrients, i.e. enrichment of 
the prey's environment. Intuitively, increasing $b$ should enlarge the prey population and make it more robust from 
extinction. This is the paradox of enrichment. 

\section{The Resolution of the Paradox of Enrichment}

In order to resolve the paradox of enrichment, it is fundamental to rewrite the system (\ref{ud1})-(\ref{ud2}) in 
the dimensionless form [$\pa_x =0$ in (\ref{Md1})-(\ref{Md2})]:
\begin{eqnarray}
& & \frac{d u}{d t} = u(1-u ) - \frac{u}{u+H} v , \label{UMd1} \\
& & \frac{d v}{d t} = k\left (\frac{u}{u+H} - r \right ) v , \label{UMd2} 
\end{eqnarray}
and the key to the resolution is a complete understanding of the dimensionless capacity-predation number $H$.

First we need to understand the dynamics of (\ref{UMd1})-(\ref{UMd2}) in details. For all parameter values, there 
are two trivial steady states:
\[
(I). \ u_*=0, \ v_* = 0; \quad (II). \ u_*=1, \ v_* = 0.
\]
The steady state (I) is a saddle for all parameter values. The steady state (II) is a stable node when $H> \frac{1}{r} - 1$,
a saddle when $0<H< \frac{1}{r} - 1$ in which case, a nontrivial steady state is born. That is, when $0<H< \frac{1}{r} - 1$,
there is a nontrivial steady state which is our main focus:
\begin{equation}
u_*=\frac{r}{1-r}H, \quad  v_*= (1-u_*)(u_*+H); \label{fp}
\end{equation}
which is the intersection point of the parabola and the vertical line (Figure \ref{PV}):
\[
(P). \ v= -\left [ u-\frac{1}{2}(1-H) \right ]^2 + \left [\frac{1}{2}(1+H) \right ]^2, \quad (V). \ u = \frac{r}{1-r}H.
\]
\begin{figure}[ht] 
\centering
\includegraphics[width=4.5in,height=4.5in]{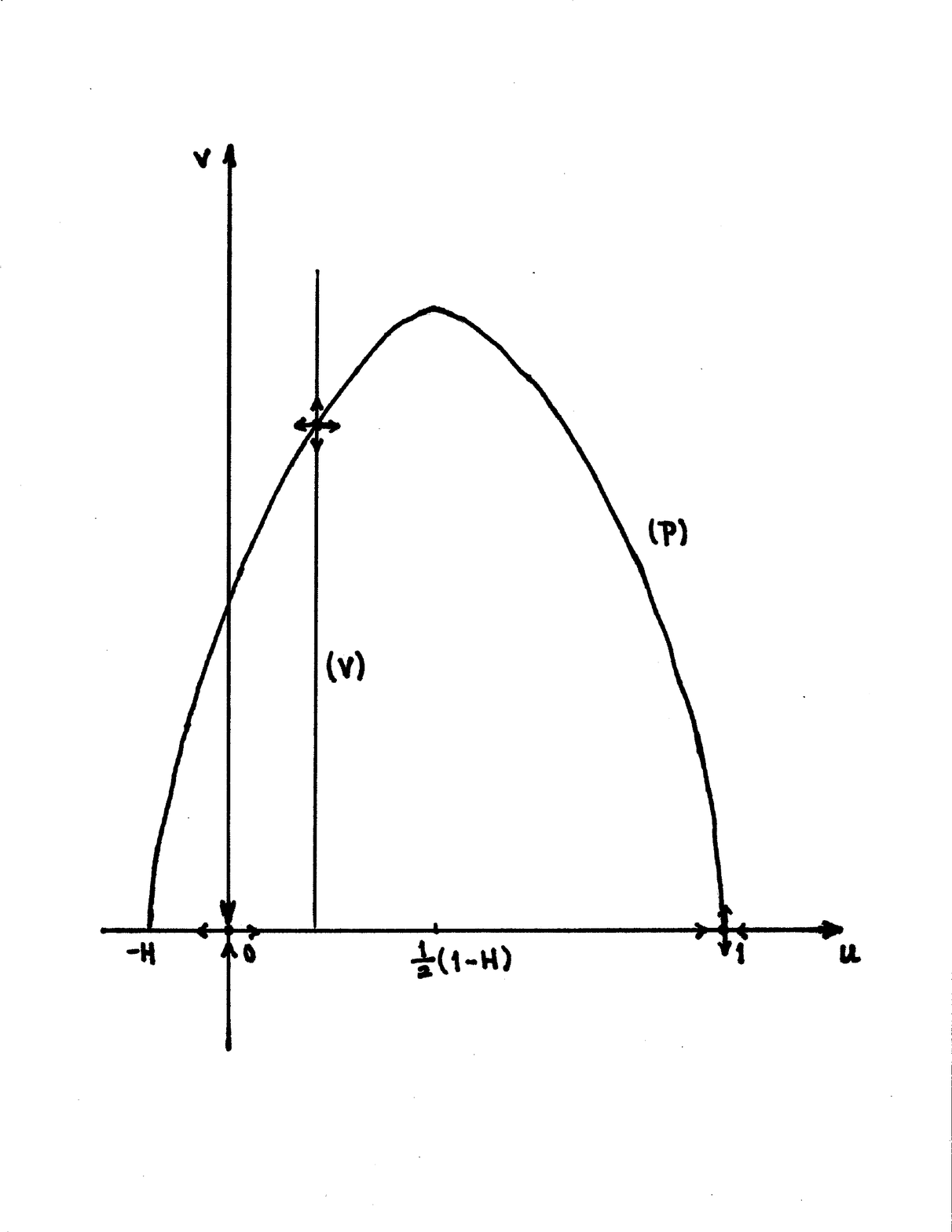}
\caption{The phase plane of the spatially uniform system (\ref{UMd1})-(\ref{UMd2}).}
\label{PV}
\end{figure}
\begin{figure}[ht] 
\centering
\includegraphics[width=4.5in,height=4.5in]{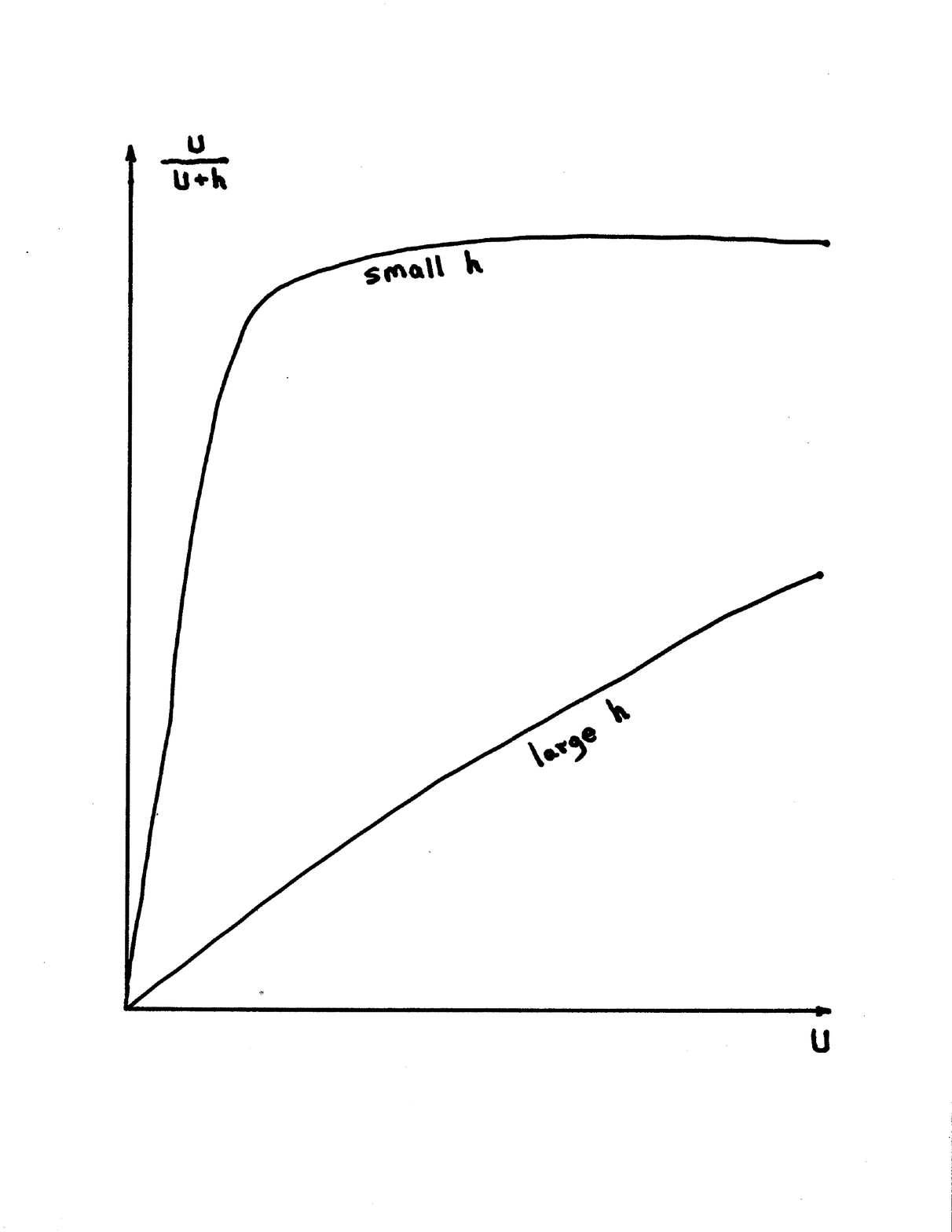}
\caption{The predation graphs.}
\label{slh}
\end{figure}
On the parabola (P), $\frac{d u}{d t} = 0$, and on the vertical line (V), $\frac{d v}{d t} = 0$. Linearizing the  
system (\ref{UMd1})-(\ref{UMd2}) at the steady state (\ref{fp}), we get 
\begin{eqnarray*}
& & \frac{d u}{d t} = \frac{u_*}{u_*+H} \left [ (1-2u_* -H)u - v \right ],  \\
& & \frac{d v}{d t} = \frac{kH(1-u_*)}{u_*+H} u .
\end{eqnarray*}
The eigenvalues of this linear system is given by
\begin{equation}
\la = \frac{1}{2} (1-2u_* -H) \pm \sqrt{ \left [ \frac{1}{2} (1-2u_* -H)\right ]^2 - kH \frac{1-u_*}{u_*}} .
\label{uev}
\end{equation}
The sign of the real part of $\la$ is decided by the sign of the quantity $1-2u_* -H$. Setting $1-2u_* -H =0$,
one gets 
\[
u_* = \frac{1}{2}(1-H),
\]
which is on the symmetry axis of the parabola (P). This leads to the following fact first observed by M. Rosenzweig 
\cite{Ros71}:
\begin{itemize}
\item If $u_*$ in (\ref{fp}) is to the left of the symmetry axis of the parabola (P), then the steady state (\ref{fp})
is linearly unstable. If $u_*$ in (\ref{fp}) is to the right of the symmetry axis of the parabola (P), then the steady 
state (\ref{fp})is linearly stable. If $u_*$ in (\ref{fp}) is on the symmetry axis of the parabola (P), then the eigenvalues
$\la$ in (\ref{uev}) are purely imaginary. 
\end{itemize}
Using this fact, we derive the following linear instability criterion of the steady state (\ref{fp}):
\begin{equation}
0<H<\frac{2}{1+r} - 1.
\label{H}
\end{equation}
(An interesting note is that the unstable zone (\ref{H}) is symmetric with respect to $H=r$, i.e. it is the same with 
$0<r<\frac{2}{1+H} - 1$.)
Based upon this instability criterion, we offer the following resolution of the paradox of enrichment.
\begin{itemize}
\item {\bf The Resolution}: Unlike the original form of the model (\ref{ud1})-(\ref{ud2}), the dimensionless form 
of the model (\ref{UMd1})-(\ref{UMd2}) is governed by $3$ dimensionless numbers $H$, $r$ and $k$ (\ref{dln}); while 
the instability of the steady state (\ref{fp}) is governed by $2$ of them, $H$ and $r$. $H$ is a ratio of the half-saturation 
$h$ and carrying capacity $b$, while $r$ is independent of $h$ and $b$, and is a ratio of predator mortality rate $\mu$ and 
coefficient of growth from food $\k \ga$. The instability criterion (\ref{H}) says that for a fixed $r$, $r \in (0,1)$;
when $H$ is smaller than $\frac{2}{1+r} - 1$, the steady state (\ref{fp}) is linearly unstable (leading to possible extinction).
The model displays a very special feature: Increasing the carrying capacity $b$ (for fixed half-saturation $h$) and 
decreasing the half-saturation $h$ (for fixed carrying capacity $b$) have the same effect on the capacity-predation number 
$H$, that is, $H$ decreases. Decreasing the half-saturation $h$ implies more aggressive predation (especially when the 
prey population $U$ is small), see Figure \ref{slh}. Notice that 
\[
\text{As } U \ra 0^+, \ \frac{U}{U+h} \ra 1;
\]
and 
\[
\frac{d}{dU} \frac{U}{U+h} \bigg |_{U=0} = 1/h .
\]
Since there is no paradox between more aggressive predation (especially when the prey population $U$ is small) and 
extinction of prey led by the instability of the steady state (\ref{fp}), the paradox of enrichment now reduces to a 
paradox between more aggressive predation (decreasing the half-saturation $h$) and enrichment (increasing the carrying 
capacity $b$). As mentioned above, the special feature of the model (\ref{UMd1})-(\ref{UMd2}) is that more aggressive 
predation (decreasing $h$) and enrichment (increasing $b$) is not a paradox, and results in the same effect on the governing 
dimensionless number $H$. This offers a resolution to the so-called paradox of enrichment. 
\end{itemize}

\section{More General Model}

From last section, we see that the predation term in (\ref{UMd1})-(\ref{UMd2}) is mostly responsible for generating 
the so-called paradox of enrichment. In this section, we will explore more general form of the predation term. Thus 
we will study the following more general model,
\begin{eqnarray}
& & \frac{d u}{d t} = u(1-u ) - f(bu) v , \label{gm1} \\
& & \frac{d v}{d t} = k\left (f(bu) - r \right ) v , \label{gm2} 
\end{eqnarray}
where $f$ is a monotonically increasing non-negative function. For the model (\ref{UMd1})-(\ref{UMd2}), $f(U)=\frac{U}{U+h}$. 
The nontrivial steady state for (\ref{gm1})-(\ref{gm2}) is given by 
\[
f(bu_*) = r, \ v_*= \frac{u_*(1-u_*)}{f(bu_*)} = \frac{1}{r}u_*(1-u_*),
\]
where $0<u_*<1$. Linearizing (\ref{gm1})-(\ref{gm2}) at this steady state, we get
\begin{eqnarray*}
& & \frac{d u}{d t} = \left [ 1-2u_* -\frac{b}{r} f'(bu_*)u_*(1-u_*)\right ] u - f(bu_*) v ,  \\
& & \frac{d v}{d t} = k\frac{b}{r} f'(bu_*)u_*(1-u_*)u .
\end{eqnarray*}
The eigenvalues of this linear system is given by
\begin{eqnarray}
\la &=& \frac{1}{2} \left [ 1-2u_* -\frac{b}{r} f'(bu_*)u_*(1-u_*)\right ] \non \\
& & + \sqrt{\left [\frac{1}{2} \left [ 1-2u_* -\frac{b}{r} f'(bu_*)u_*(1-u_*)\right ]\right ]^2 -
kbf'(bu_*)u_*(1-u_*)}. \label{geva}
\end{eqnarray}
The sign of the real part of $\la$ is decided by the sign of the quantity
\[
\frac{1}{2} \left [ 1-2u_* -\frac{b}{r} f'(bu_*)u_*(1-u_*)\right ].
\]
Besides $f(U)=\frac{U}{U+h}$, another natural model is $f(U)=U$. Notice that
\[
\frac{U}{U+h} \bigg |_{U=h} = 1/2, \quad U|_{U=h} = h;
\]
thus, for small $h$, the model $f(U)=\frac{U}{U+h}$ represents a much more aggressive predation than $f(U)=U$ when the 
prey population $U$ is small. As $U \ra +\infty$, 
\[
\frac{U}{U+h} \ra 1, \quad U \ra +\infty.
\]
That is, the model $f(U)=U$ represents unlimited predation ability for large prey population $U$. In this sense, 
$f(U)=\frac{U}{U+h}$ serves as a better model. On the other hand, the prey population is finite with capacity $b$. With 
a proper choice of the coefficient of predation $\ga$ (\ref{ud1}), $f(U)=U$ still represents a limited predation 
ability for prey population $U$ near its capacity. When $f(U)=U$, the eigenvalue (\ref{geva}) becomes
\[
\la = -\frac{r}{2b} \pm \sqrt{\left (\frac{r}{2b}\right )^2 - kr \left (1-\frac{r}{b}\right)},
\]
where $r<b$ is required. Thus for the model $f(U)=U$, the nontrivial steady state is always stable. 

\section{The Limit Cycle in the Phase Plane}

Returning to the spatially uniform system (\ref{UMd1})-(\ref{UMd2}), we can prove the following $\om$-limit set theorem.
\begin{theorem}
Under the dynamics of (\ref{UMd1})-(\ref{UMd2}), when $0<H<\frac{2}{1+r}-1$ ($0<r<1$), the $\om$-limit set of every point 
in the first open quadrant of the phase plane except the unstable steady state (\ref{fp}), is a periodic orbit 
(not necessarily the same periodic orbit). 
\end{theorem}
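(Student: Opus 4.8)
The plan is to apply the Poincar\'e--Bendixson theorem to the planar flow (\ref{UMd1})--(\ref{UMd2}), whose vector field is smooth on the closed first quadrant (there $u+H>0$). Three things are needed: positive invariance of the open first quadrant, uniform boundedness of forward orbits, and the exclusion of equilibria from the relevant $\om$-limit sets; granting these, the Poincar\'e--Bendixson trichotomy (periodic orbit, single equilibrium, or graphic) gives the claim.

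\emph{Invariance and boundedness.} On $\{u=0\}$ one has $\dot u=0$, and on $\{v=0\}$, $\dot v=0$, so both axes are invariant and the open first quadrant is positively invariant. From (\ref{UMd1}), $\dot u\le u(1-u)$, and $\dot u=-v/(u+H)<0$ whenever $u=1$ in the open quadrant, hence $0<u(t)\le M:=\max\{u(0),1\}$ for all $t\ge 0$. Setting $\Phi=ku+v$, a direct computation from (\ref{UMd1})--(\ref{UMd2}) gives $\dot\Phi=ku(1-u)-krv$, and, using $0<u\le M$, one gets $\dot\Phi\le A-kr\Phi$ for a constant $A=A(M,k,r)$, so $\Phi(t)\le\max\{\Phi(0),A/(kr)\}$. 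Thus every forward orbit starting in the open quadrant is defined for all $t\ge 0$ and has compact closure in the closed first quadrant, so its $\om$-limit set $\Om$ is nonempty, compact, connected and invariant.

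\emph{The equilibria.} In the closed first quadrant the equilibria are exactly $(I)=(0,0)$, $(II)=(1,0)$ and the nontrivial state (\ref{fp}); in the regime $0<H<\tfrac{2}{1+r}-1$, $0<r<1$, all three are hyperbolic. By (\ref{uev}) the two eigenvalues at (\ref{fp}) have product $kH(1-u_*)/u_*>0$ and, by the instability criterion (\ref{H}), common real part $\tfrac12(1-2u_*-H)>0$; hence (\ref{fp}) is a source, with trivial stable manifold. On the $v$-axis $\dot v=-krv<0$, so the stable manifold of $(I)$ is the positive $v$-axis, on which orbits run backward to $v=+\infty$; on the $u$-axis $\dot u=u(1-u)$, so the stable manifold of $(II)$ is the positive $u$-axis, on which the segment $\{0<u<1\}$ is a heteroclinic orbit from $(I)$ to $(II)$ while $\{u>1\}$ runs backward to $u=+\infty$. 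In particular, no interior orbit converges to $(I)$ or $(II)$, and the only interior orbit converging to (\ref{fp}) is the stationary one; so for the orbits considered in the theorem $\Om$ is never a single equilibrium.

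\emph{Conclusion.} If $\Om$ contained an equilibrium, the structure theorem would make it a graphic, i.e. a finite union of equilibria and connecting orbits forming cyclic chains; then each equilibrium occurring in $\Om$ would be the $\om$-limit of some non-constant orbit lying in $\Om$. This is impossible for (\ref{fp}) (trivial stable manifold); for $(I)$ the orbit would lie on the positive $v$-axis and hence be unbounded, contradicting boundedness of $\Om$; and for $(II)$ it would lie on the positive $u$-axis, forcing either unboundedness or $(I)\in\Om$, both already ruled out. Hence $\Om$ contains no equilibrium, and by Poincar\'e--Bendixson it is a periodic orbit; since the orbit point varies, this periodic orbit need not be the same for different initial data. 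The exceptional point in the statement is exactly the source (\ref{fp}), which cannot be the $\om$-limit of any orbit other than itself. In this argument the $\Phi$-estimate and the eigenvalue bookkeeping are routine; the real work is the last step --- excluding the degenerate $\om$-limit sets that are ``graphics'', above all heteroclinic cycles through $(I)$ and $(II)$ --- which hinges on combining the boundary dynamics with the boundedness of $\Om$.
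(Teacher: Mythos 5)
Your proof is correct, but it follows a genuinely different route from the paper's. The paper proves the theorem by an explicit trapping-region construction: it builds a compact, positively invariant annular region $R$ whose outer boundary is pieced together from a vertical segment $AB$ (with $u>1$, field pointing left), an orbit arc $BC$, a horizontal segment $CD$, and portions of the axes through the boundary equilibria $(0,0)$ and $(1,0)$, and whose inner boundary is a small closed curve $S$ around the repelling steady state (\ref{fp}) on which the field points outward; Poincar\'e--Bendixson is then applied inside $R$, and the whole open quadrant is exhausted by moving $AB$ to the right, $B$ up, and shrinking $S$. You instead get compactness of forward orbits analytically, via the comparison $\dot u\le u(1-u)$ and the linear functional $\Phi=ku+v$ with $\dot\Phi=ku(1-u)-krv$, and then exclude equilibria from the $\om$-limit set by hyperbolicity and stable-manifold (Butler--McGehee-type) arguments combined with the Poincar\'e--Bendixson trichotomy: the interior state (\ref{fp}) is a source, and any orbit in the limit set tending to $(0,0)$ or $(1,0)$ would have to lie on an axis and be unbounded backward in time (or drag $(0,0)$ into the limit set), contradicting compactness. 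Each approach buys something: the paper's construction is geometric, elementary, and reuses the region $R$ in the proof of the next theorem; your argument avoids constructing boundary arcs and, notably, treats explicitly the possibility that the limit set meets the saddles $(0,0)$ and $(1,0)$ sitting on the boundary of the paper's region $R$ --- a point the paper's proof passes over in silence, since its version of Poincar\'e--Bendixson requires the limit set to avoid equilibria. Two small imprecisions, neither affecting correctness: the eigenvalues at (\ref{fp}) have ``common real part'' only in the complex case (when real they are simply both positive, so the conclusion ``source'' stands), and the hyperbolicity of $(1,0)$ as a saddle uses $H<\frac{1}{r}-1$, which indeed follows from $0<H<\frac{2}{1+r}-1$ with $0<r<1$; also, your assertion that every equilibrium in a nontrivial limit set is the $\om$-limit of a nonconstant orbit in that set is exactly the Butler--McGehee lemma for hyperbolic equilibria, which is the situation here, so the step is sound but deserves that citation rather than being attributed to the bare structure theorem.
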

\begin{figure}[ht] 
\centering
\includegraphics[width=4.5in,height=4.5in]{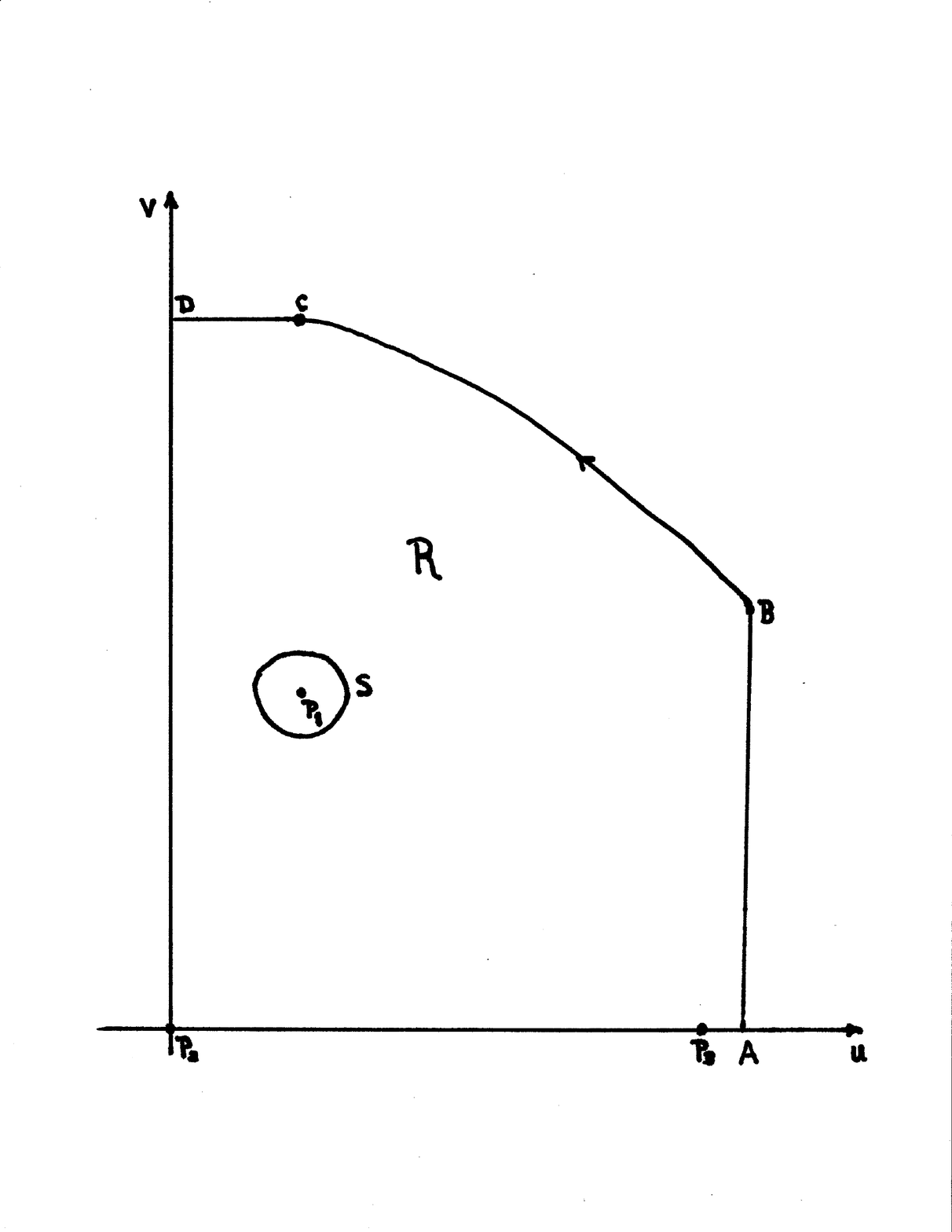}
\caption{The region setup for the proof of the existence of a limit cycle attractor.}
\label{lcf}
\end{figure}
\begin{figure}[ht] 
\centering
\includegraphics[width=4.5in,height=4.5in]{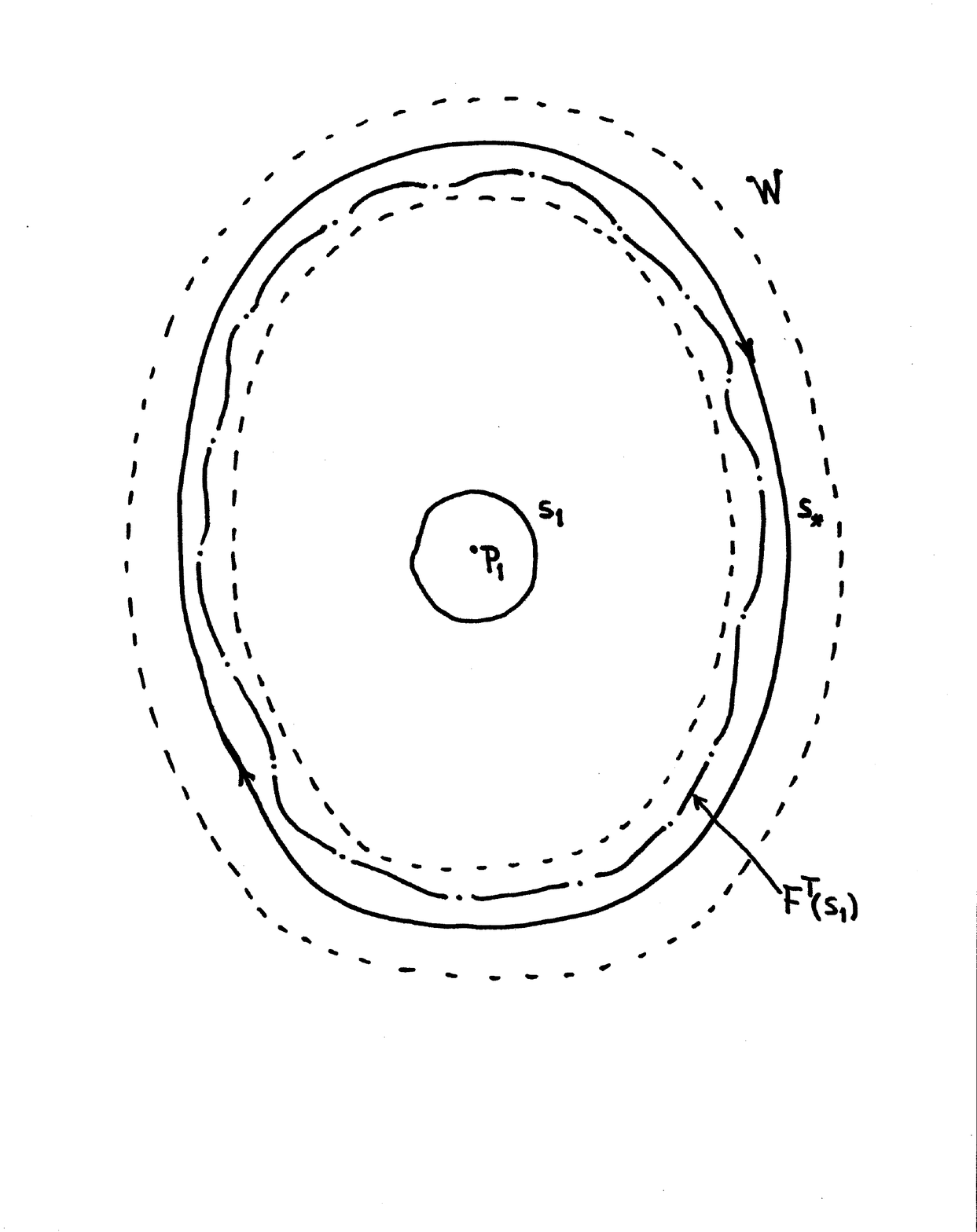}
\caption{The setup for the proof that the limit cycle loops around the steady state.}
\label{lca}
\end{figure}

\begin{figure}[ht] 
\centering
\includegraphics[width=4.5in,height=4.5in]{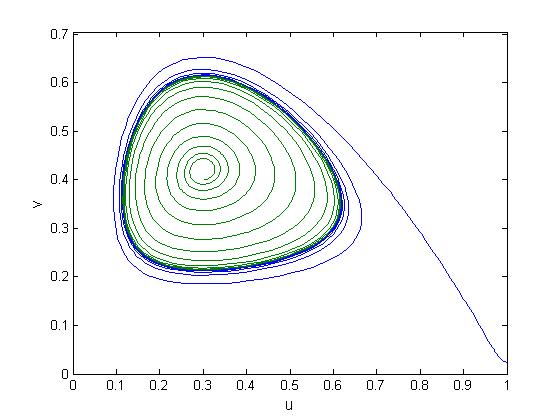}
\caption{The limit cycle attractor in the first open quadrant of the phase plane,
where $r=1/2$, $H=0.3$, $k=1$, and the two initial conditions are ($u=1.01, v=0.02$) and ($u=0.30, v=0.42$).}
\label{flc3}
\end{figure}

\begin{figure}[ht] 
\centering
\subfigure[$H=0.2$]{\includegraphics[width=2.3in,height=2.3in]{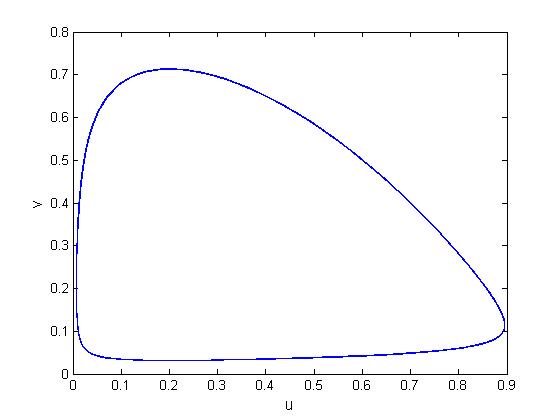}}
\subfigure[$H=0.1$]{\includegraphics[width=2.3in,height=2.3in]{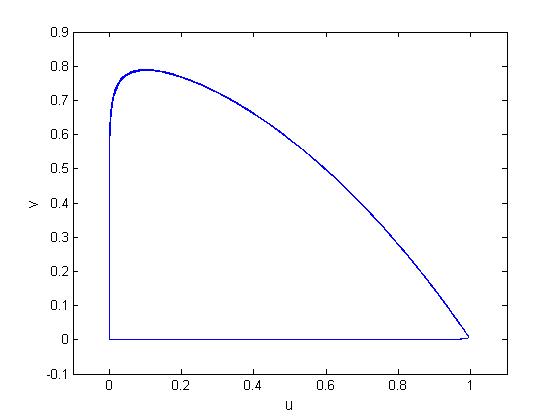}}
\caption{The deformation of the limit cycle as $H$ is decreased, where $r=1/2$ and $k=1$.}
\label{flc41}
\end{figure}

\begin{figure}[ht] 
\centering
\includegraphics[width=4.5in,height=4.5in]{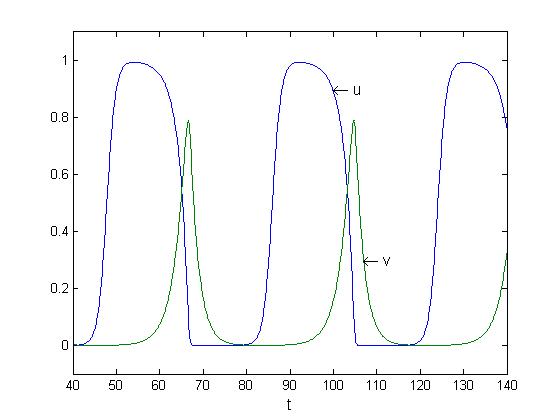}
\caption{The time series graph of the limit cycle in Figure \ref{flc41}(b), where $H=0.1$, $r=1/2$, and $k=1$.}
\label{flc2}
\end{figure}

\begin{figure}[ht] 
\centering
\includegraphics[width=4.5in,height=3.0in]{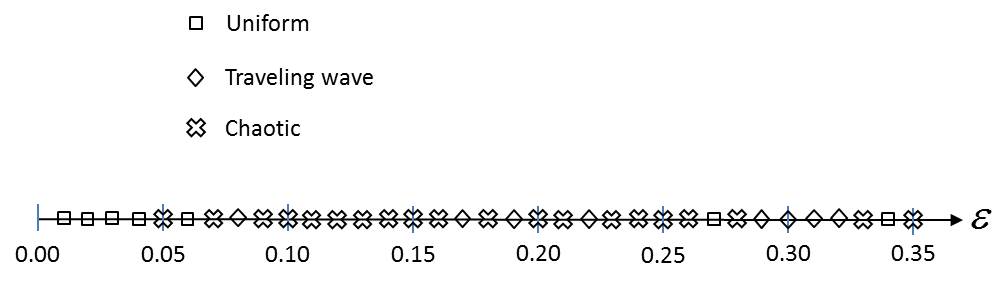}
\caption{An illustration of the riddled basin of attraction of the limit cycle on the plane, where the uniform, traveling wave,
and chaos asymptotic states are shown in Figures \ref{uni}, \ref{tw}, \ref{chaos}.}
\label{RB}
\end{figure}

\begin{figure}[ht] 
\centering
\includegraphics[width=4.5in,height=4.5in]{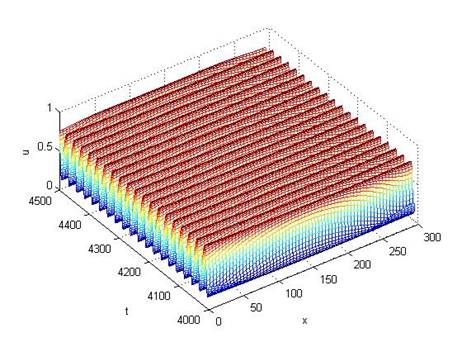}
\caption{The uniform asymptotic state referred to in Figure \ref{RB}.}
\label{uni}
\end{figure}

\begin{figure}[ht] 
\centering
\includegraphics[width=4.5in,height=4.5in]{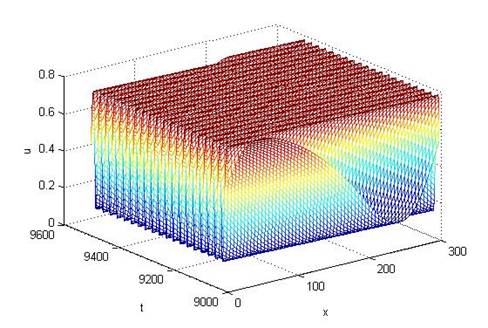}
\caption{The traveling wave asymptotic state referred to in Figure \ref{RB}.}
\label{tw}
\end{figure}

\begin{figure}[ht] 
\centering
\includegraphics[width=4.5in,height=4.5in]{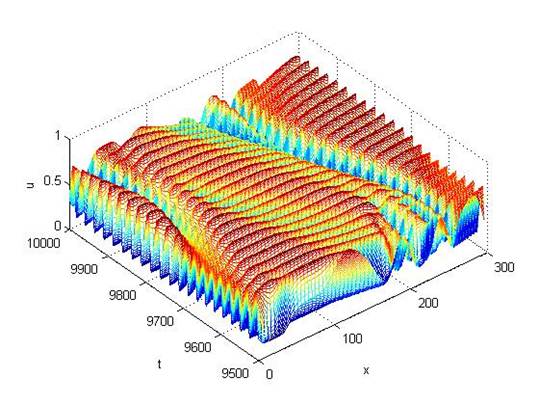}
\caption{The chaos asymptotic state referred to in Figure \ref{RB}.}
\label{chaos}
\end{figure}

\begin{proof}

First we set up the compact region $R$ as in Figure \ref{lcf}. $P_1$ is the steady state (\ref{fp}) which is unstable since 
$0<H<\frac{2}{1+r}-1$. $S$ is a small closed curve around $P_1$, on which the vector field given by the right hand side 
of (\ref{UMd1})-(\ref{UMd2}) is transversal to $S$ and points outside of $S$ since both eigenvalues (\ref{uev}) of the 
steady state (\ref{fp}) have positive real parts. $P_2$ is the unstable steady state ($0,0$). $P_3$ is the unstable steady 
state ($1,0$). $AB$ is a vertical line for which $u >1$, such that the vector field is transversal to $AB$ and points 
leftward by equation (\ref{UMd1}). The vertical coordinate of $B$ is chosen to be greater than $\frac{1}{4r}$, so that on 
the orbit $BC$, the right hand side of (\ref{UMd1}) is negative, where the horizontal coordinate of $C$ is the $u_*$ given 
in (\ref{fp}), i.e. $C$ and $P_1$ have the same horizontal coordinate. Notice that $\frac{u}{u+H}$ is strictly monotonically 
increasing in $u$. The right hand side of (\ref{UMd2}) at $C$ is zero. The right hand side of (\ref{UMd2}) on the orbit $BC$ 
(except at the point $C$) is positive. $CD$ is a horizontal segment on which the $v$ coordinate is a constant, and the right 
hand side of (\ref{UMd2}) is negative (except at the point $C$). The region $R$ is defined as the region outside $S$ and 
inside the loop $ABCDP_2P_3A$. The region $R$  is a positively invariant region (i.e. invariant as time increases). By the 
well-known Poincar\'e-Bendixson theorem, the $\om$-limit set of any point in the region $R$  is a periodic orbit. Since the 
vertical line $AB$ can be moved to the right arbitrarily, the point $B$ can be moved up arbitrarily, and the closed curve 
$S$ can be arbitrarily small, the claim of the theorem is proved.
\end{proof}
\begin{remark}
As shown later, it can be verified numerically that all the $\om$-limit sets of points in the first open quadrant of the 
phase plane except the unstable steady state (\ref{fp}) is actually the same stable limit cycle. Proving such a claim is 
not easy. 
\end{remark}
\begin{theorem}
If all the $\om$-limit sets of points in the first open quadrant of the 
phase plane except the unstable steady state (\ref{fp}) is actually the same stable limit cycle $S_*$, then 
$S_*$ loops around the unstable steady state (\ref{fp}).
\end{theorem}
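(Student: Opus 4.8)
The plan is to argue that $S_*$, being a genuine periodic orbit of a $C^1$ planar vector field, must enclose at least one equilibrium of (\ref{UMd1})--(\ref{UMd2}), and then to eliminate every candidate except the nontrivial steady state (\ref{fp}). Throughout write $P_1$ for the steady state (\ref{fp}), $P_2=(0,0)$, $P_3=(1,0)$, and $Q=\{u>0,\ v>0\}$ for the open first quadrant; by Section~4 these three points are the only equilibria in the range $0<H<\frac{2}{1+r}-1$.

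First I would show $S_*\subset Q$. Both coordinate axes are invariant: on $\{u=0\}$ the flow is $\dot v=-kr v$ and on $\{v=0\}$ it is $\dot u=u(1-u)$, so the restricted dynamics on either axis is strictly monotone and supports no nonconstant periodic orbit. Since $Q$ is invariant and $S_*$ is the $\om$-limit set of a point of $Q$ (hypothesis, together with Theorem~6.1), $S_*$ lies in the closed first quadrant; were it to meet an axis at a single point, uniqueness of solutions would pin the whole orbit $S_*$ onto that axis, which is impossible. Hence $S_*\subset Q$, so $S_*$ is a Jordan curve in $Q$.

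Next I would locate the bounded region $D$ enclosed by $S_*$. The complement of $Q$ in the plane, $\{u\le 0\}\cup\{v\le 0\}$, is connected and unbounded and is disjoint from $S_*$; therefore it lies in the unbounded component of $\mathbb{R}^2\setminus S_*$, i.e.\ in the exterior of $S_*$. Consequently $D\subset Q$, and in particular $P_2=(0,0)$ and $P_3=(1,0)$, lying on $\pa Q$, are in the exterior of $S_*$ and cannot be encircled by it. Now invoke the classical index fact: a periodic orbit of a planar $C^1$ vector field encloses at least one equilibrium, and the sum of the Poincar\'e indices of the enclosed equilibria is $+1$. Since $P_2$ and $P_3$ are excluded, $P_1$ must lie in $D$; that is, $S_*$ loops around (\ref{fp}). (Concretely, the index $+1$ reflects the sign pattern visualized in Figure~\ref{lca}: along $S_*$ the flow crosses the line $u=u_*$ leftward when $v>v_*$ and rightward when $v<v_*$, and crosses the parabola $(P)$ with $\dot v$ changing sign at $u=u_*$, which is exactly winding once around the intersection point $P_1$.)

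I expect the only real subtlety to be the middle step — certifying that neither $P_2$ nor $P_3$ is the equilibrium forced inside $S_*$ — which is why establishing $S_*\subset Q$ and the connectedness of the complement of the first quadrant is essential rather than cosmetic. As a self-contained alternative to the index theorem one may argue directly: if $P_1\notin D$, then $\overline D$ is a nonempty compact invariant set containing no equilibrium, so by the Poincar\'e--Bendixson theorem the $\al$-limit set of an interior point is a periodic orbit; since $S_*$, being a stable limit cycle, attracts points of its interior, this periodic orbit must lie strictly inside $D$ and be different from $S_*$, whence its points have $\om$-limit set $\neq S_*$, contradicting the hypothesis.
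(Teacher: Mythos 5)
Your proof is correct, but it follows a genuinely different route from the paper's. You rely on the classical index (or Poincar\'e--Bendixson) fact that a nonconstant periodic orbit of a planar $C^1$ vector field must enclose an equilibrium, and you then eliminate $(0,0)$ and $(1,0)$ by a topological argument: the axes are invariant with monotone dynamics, so $S_*$ lies in the open quadrant $Q$; the complement of $Q$ is connected and unbounded, hence exterior to $S_*$, so the enclosed region lies in $Q$, where the only equilibrium is (\ref{fp}). The paper instead exploits the attraction hypothesis directly: it takes a small loop $S_1$ winding once around the unstable steady state, uses compactness of $S_1$ to find a single time $T$ with $F^T(S_1)$ contained in a tubular neighborhood $W$ of $S_*$, and notes that since the flow map is a homeomorphism fixing the steady state, $F^T(S_1)$ still winds around it, forcing $W$ and hence $S_*$ to do so. Your argument is more elementary in the sense of invoking only standard planar theory, proves something stronger (any periodic orbit meeting the open first quadrant must encircle (\ref{fp}), with the uniqueness/stability hypothesis essentially unused), and avoids the covering/uniform-time step; the paper's argument avoids invoking the index theorem altogether, at the cost of using the full global-attractivity hypothesis. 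The only soft spot in your write-up is the alternative closing argument, where you assert that the $\al$-limit periodic orbit of an interior point cannot be $S_*$ itself; that is true but deserves a sentence (monotonicity of successive crossings of a transversal, or the remark that an asymptotically stable cycle cannot be an $\al$-limit set of a point off the cycle), whereas your primary index argument needs no such repair.
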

\begin{proof}
Let $W$ be a small tubular neighborhood of the limit cycle $S_*$, which is the local stable manifold of $S_*$, see 
Figure \ref{lca}. Let $S_1$ be a closed curve in the region R and near $S$ (Figure \ref{lcf}), that loops around the 
steady state $P_1$ once [in fact, $S_1$ can be just $S$]. For any $q \in S_1$, there is a segment neighborhood of $q$ 
in $S_1$, $\xi_q \subset S_1$ and a time $T_q$, such that $F^{T_q}(\xi_q) \subset W$ where $F^t$ is the evolution operator 
of the system (\ref{UMd1})-(\ref{UMd2}). All such segments form an open cover of $S_1$. By the compactness of $S_1$, 
there is a finite cover $\{ \xi_{q_n} \}_{n=1, \cdots, N}$. Let 
\[
T = \max_{n=1, \cdots, N} \{ T_{q_n} \} ,
\]
then $F^T(S_1) \subset W$. Since the region $R$ is positively invariant, $F^T(S_1)$ still loops around the steady state $P_1$,
then the tubular neighborhood $W$ also loops around the steady state $P_1$, thus the limit cycle $S_*$ also loops around 
the steady state $P_1$.
\end{proof}

Numerically one can verifies that the attractor in the first open quadrant of the phase plane is a limit cycle as shown 
in Figure \ref{flc3}. As $H$ is decreased, the limit cycle is quickly getting closer and closer to the $v$-axis (and 
$u$-axis) as shown in Figure \ref{flc41}. The time series graph of the limit cycle in Figure \ref{flc41}(b) is shown in 
Figure \ref{flc2}. One can see clearly that with small random environmental perturbations, the system will become extinct!

\section{Spatial Dependence}

\subsection{The Limit Cycle Is Linearly Stable}

Let $S_*$ be the limit cycle on the plane,
\[
S_*: \quad  u= u^*(t), \quad  v= v^*(t).
\]
The period of $S_*$ is $T_*$. Linearizing (\ref{Md1})-(\ref{Md2}) at $S_*$, we get 
\begin{eqnarray}
& & \frac{\pa u}{\pa t}  =  \frac{\pa^2 u}{\pa x^2} + \left [ 1-2u^*(t) - \frac{Hv^*(t)}{(u^*(t)+H)^2} \right ] u
 - \frac{u^*(t)}{u^*(t)+H} v , \label{llc1} \\
& & \frac{\pa v}{\pa t}  =  \frac{\pa^2 v}{\pa x^2} + \frac{kHv^*(t)}{(u^*(t)+H)^2}u + k \left [ \frac{u^*(t)}{u^*(t)+H}
-r \right ] v . \label{llc2} 
\end{eqnarray}
Using the Fourier mode 
\[
u = u_\xi e^{i\xi x} + \text{ c.c. }, \quad v = v_\xi e^{i\xi x} + \text{ c.c. },
\]
the linear system (\ref{llc1})-(\ref{llc2}) is transformed into 
\begin{eqnarray*}
& & \frac{\pa u_\xi}{\pa t}  = -\xi^2 u_\xi + \left [ 1-2u^*(t) - \frac{Hv^*(t)}{(u^*(t)+H)^2} \right ] u_\xi
 - \frac{u^*(t)}{u^*(t)+H} v_\xi , \\
& & \frac{\pa v_\xi}{\pa t}  = -\xi^2 v_\xi + \frac{kHv^*(t)}{(u^*(t)+H)^2}u_\xi + k \left [ \frac{u^*(t)}{u^*(t)+H}
-r \right ] v_\xi . 
\end{eqnarray*}
A further change of variables
\[
u_\xi = e^{-\xi^2t} \hu_\xi , \quad v_\xi = e^{-\xi^2t} \hv_\xi ,
\]
transforms this linear system into
\begin{eqnarray*}
& & \frac{\pa \hu_\xi}{\pa t}  = \left [ 1-2u^*(t) - \frac{Hv^*(t)}{(u^*(t)+H)^2} \right ] \hu_\xi
 - \frac{u^*(t)}{u^*(t)+H} \hv_\xi , \\
& & \frac{\pa \hv_\xi}{\pa t}  = \frac{kHv^*(t)}{(u^*(t)+H)^2}\hu_\xi + k \left [ \frac{u^*(t)}{u^*(t)+H}
-r \right ] \hv_\xi ;
\end{eqnarray*}
which is a stable linear system since the limit cycle is linearly stable in the plane. Since the plane is a subspace 
of the infinite dimensional phase space under Neumann or periodic boundary condition, the limit cycle is linearly stable in such 
an infinite dimensional phase space.

\subsection{The Riddled Basin of Attraction of the Limit Cycle}

Since it is linearly stable, the limit cycle on the plane is an attractor in the entire infinite dimensional phase space.
Numerical simulations are conducted on the system (\ref{Md1})-(\ref{Md2}) under the periodic boundary condition 
with spatial period $L=300$, where $c_1=0$, $c_2=0$, $r=0.8$, $H=0.1$, and $k=1.0$. The initial conditions of the 
numerical simulations are given by 
\[
u(x)= 0.4018, \  v(x)=0.3754 + \epsilon \cos(\pi x/L),
\]
for different values of $\e$. When $\e =0$, the initial condition lies on the limit cycle on the plane. Figure \ref{RB} 
illustrates the riddled nature of the basin of attraction of the limit cycle on the plane. This riddled nature indicates that  
spatial perturbations sometimes (but neither always nor never) can remove the paradox of enrichment near the limit cycle 
on the plane.

\subsection{Other Bifurcations?}

Linearizing (\ref{Md1})-(\ref{Md2}) at the steady state (\ref{fp}), we get 
\begin{eqnarray}
& & \frac{\pa u}{\pa t}  =  \frac{\pa^2 u}{\pa x^2} + r \left [ (1-2u_*-H)u-v \right ] , \label{glu1} \\
& & \frac{\pa v}{\pa t}  =  \frac{\pa^2 v}{\pa x^2} + rkH \frac{1-u_*}{u_*}u . \label{glu2} 
\end{eqnarray}
Using the Fourier mode 
\[
u = u_\xi e^{i\xi x} + \text{ c.c. }, \quad v = v_\xi e^{i\xi x} + \text{ c.c. },
\]
the linear system (\ref{glu1})-(\ref{glu2}) is transformed into 
\begin{eqnarray*}
& & \frac{\pa u_\xi}{\pa t}  =  -\xi^2 u_\xi + r \left [ (1-2u_*-H)u_\xi-v_\xi \right ] , \\
& & \frac{\pa v}{\pa t}  =  -\xi^2 v_\xi + rkH \frac{1-u_*}{u_*}u_\xi . 
\end{eqnarray*}
The eigenvalues $\la$ of this system satisfy 
\begin{equation}
\la^2 +\left [ 2\xi^2 -r(1-2u_*-H) \right ] \la +\xi^2 \left [ \xi^2  -r(1-2u_*-H) \right ] 
+ r^2 kH \frac{1-u_*}{u_*} = 0 .
\label{gss}
\end{equation}
A possible Hopf bifurcation often occurs when the coefficient of the $\la$ term is zero, i.e.
\[
2\xi^2 -r(1-2u_*-H) = 0;
\]
while a possible Turing bifurcation often occurs when the constant term is zero, i.e.
\[
\xi^2 \left [ \xi^2  -r(1-2u_*-H) \right ] 
+ r^2 kH \frac{1-u_*}{u_*} = 0 .
\]
Notice also that when the the coefficient of the $\la$ term is zero, the following part (of the 
constant term) 
\[
\xi^2 \left [ \xi^2  -r(1-2u_*-H) \right ] 
\]
is minimal in $\xi^2$.

The minimum of $[ 2\xi^2 -r(1-2u_*-H)]$ in $\xi^2$ occurs at $\xi^2=0$, where the minimal value is 
$-r(1-2u_*-H)$. A Hopf bifurcation starts to occur at 
\begin{equation}
1-2u_*-H = 0 , \quad \text{i.e. } H = \frac{2}{1+r} -1 ; 
\label{hbc}
\end{equation}
as shown before in (\ref{H}). At this Hopf bifurcation, 
\[
\la = \pm i r \sqrt{kH \frac{1-u_*}{u_*}}.
\]
The steady state (\ref{fp}) bifurcates into the limit cycle (Figure \ref{flc3}) on the plane.

When $1-2u_*-H < 0$ (i.e. $H > \frac{2}{1+r} -1$), by (\ref{gss}), the 
steady state (\ref{fp}) is linearly stable in the infinite dimensional phase space. Therefore, further bifurcation 
may only occur when $1-2u_*-H > 0$ (i.e. $H < \frac{2}{1+r} -1$). When $H < \frac{2}{1+r} -1$, further Hopf bifurcations
may occur at
\begin{equation}
\xi^2 = \frac{1}{2}r(1-2u_*-H) = \frac{1}{2}r \left (1-\frac{1+r}{1-r}H \right ), 
\label{tbm}
\end{equation}
when 
\[
k >\frac{1}{4} \frac{r}{1-r-rH} \left [ 1-\frac{1+r}{1-r}H \right ]^2 .
\]
But numerically we did not observe such a bifurcation; the reason seems to be that the limit cycle in the plane is linearly stable 
for all these parameter values. 

As mentioned above, the minimum of $\xi^2 [ \xi^2  -r(1-2u_*-H)]$ in $\xi^2$ also occurs at (\ref{tbm}),
where the minimal value is
\[
-\frac{1}{4} r^2 (1-2u_*-H)^2 .
\]
A Turing bifurcation may start to occur at 
\[
\frac{1}{4} r^2 (1-2u_*-H)^2 = r^2 kH \frac{1-u_*}{u_*},
\]
that is
\begin{equation}
k = \frac{1}{4} \frac{r}{1-r-rH} \left [ 1-\frac{1+r}{1-r}H \right ]^2 ,
\label{tbc}
\end{equation}
where $\la^2=0$. When 
\[
k <\frac{1}{4} \frac{r}{1-r-rH} \left [ 1-\frac{1+r}{1-r}H \right ]^2 ,
\]
further Turing bifurcations may occur at 
\begin{eqnarray*}
\xi^2 &=&  \frac{1}{2}r(1-2u_*-H) \pm \sqrt{\left [ \frac{1}{2}r(1-2u_*-H) \right ]^2 -  r^2 kH \frac{1-u_*}{u_*}} \\
&=& \frac{1}{2}r \left (1-\frac{1+r}{1-r}H \right ) \pm \sqrt{\left [ \frac{1}{2}r \left (1-\frac{1+r}{1-r}H \right ) 
\right ]^2 - rk (1-r-rH)}. 
\end{eqnarray*}
Numerically we did not observe any Turing bifurcation. Usually Turing bifurcations are observed in (spatially) high 
dimensional systems or one dimensional systems with variable coefficients. 

\section{Conclusion}

In this article, we present a resolution on the paradox of enrichment based upon the dimensionless form of 
the mathematical model. We also explore spatial perturbations. The conclusion is that spatial perturbation 
sometimes (but neither always nor never) can remove the paradox of enrichment near the limit cycle 
on the plane.

\end{document}